\documentclass[10pt, conference, letterpaper]{IEEEtran}
\usepackage[utf8]{inputenc}
\IEEEoverridecommandlockouts
\usepackage{ifpdf}
\usepackage{graphicx}
\DeclareGraphicsExtensions{.eps}
\usepackage{epstopdf}
\usepackage[linesnumbered,ruled, boxed, lined, noend]{algorithm2e}

\usepackage{cite}
\usepackage{amssymb,amsmath,amsthm,amsfonts}
\usepackage{bm}
\usepackage{algorithmic}
\usepackage[linesnumbered,ruled, boxed, lined, noend]{algorithm2e}
\usepackage{textcomp}
\usepackage{xcolor}
\usepackage[linesnumbered,ruled, boxed, lined]{algorithm2e}
\usepackage{url}
\PassOptionsToPackage{hyphens}{url}
\usepackage{bbm}
\usepackage{float}
\usepackage{scalerel}

\def\BibTeX{{\rm B\kern-.05em{\sc i\kern-.025em b}\kern-.08em
    T\kern-.1667em\lower.7ex\hbox{E}\kern-.125emX}}

\newtheorem{proposition}{Proposition}

\allowdisplaybreaks

\DeclareMathOperator*{\argmax}{argmax}

\newcommand{\esp}{\mathbb{E}}

\newcommand{\ttag}{\mathbf{t}}

\begin{document}

\title{Secure Task Offloading and Resource Allocation Design for Multi-Layer Non-Terrestrial Networks
}

\author{

\IEEEauthorblockN{ Alejandro Flores$^*$, Isabella W. G. da Silva$^+$, Vu Nguyen Ha$^*$, Konstantinos Ntontin$^*$, }
\IEEEauthorblockN{ Hien Quoc Ngo$^+$, Michail Matthaiou$^+$ and Symeon Chatzinotas$^*$ }
\IEEEauthorblockA{$^*$Interdisciplinary Centre for Security, Reliability and Trust (SnT), University of Luxembourg, Luxembourg}
\IEEEauthorblockA{$^+$Centre for Wireless Innovation (CWI), Queen’s University Belfast, U.K.}
\IEEEauthorblockA{e-mails:$^*$\{alejandro.flores, vu-nguyen.ha, kostantinos.ntontin, symeon.chatzinotas\}@uni.lu;}
\IEEEauthorblockA{ $^+$\{iwgdasilva01, hien.ngo, m.matthaiou\}@qub.ac.uk} 
}



\newcommand{\SAT}{\mathrm{SAT}}
\newcommand{\LEO}{\mathrm{LEO}}
\newcommand{\UAV}{\mathrm{UAV}}
\newcommand{\HAPS}{\mathrm{HAPS}}
\newcommand{\IoT}{\mathrm{IoT}}
\newcommand{\maxt}{\mathrm{max}}
\newcommand{\mint}{\mathrm{min}}
\newcommand{\Niot}{N_{\mathrm{IoT}}}

\newcommand{\Nuav}{N_{u}^{\mathrm{ULA}}}
\newcommand{\NuavU}{N_{u}^{\mathrm{UPA}}}
\newcommand{\Nhaps}{N_{h}}
\newcommand{\Nleo}{N_{s}}

\maketitle

\begin{abstract}

Remote and resource-constrained Internet-of-Things (IoT) deployments often lack terrestrial connectivity for task offloading, motivating non-terrestrial networks (NTNs) with onboard multiaccess edge computing (MEC) capabilities. Nevertheless, in the presence of malicious actors, authentication needs to be performed to avoid non-authorized nodes from draining the computing resources of the NTN nodes. As a solution, we propose a four-layer MEC-enabled NTN with unmanned aerial vehicles (UAVs) acting as access nodes, a high altitude platform station (HAPS) acting as coordinator and authenticator, and a constellation of low-Earth orbit satellites (LEOSats) acting as remote MEC servers. We consider a tag-based physical-layer authentication (PLA) scheme to authenticate legitimate users, and formulate a joint task offloading decision and resource allocation for the admitted tasks, which is solved via block coordinate descent. Numerical results show that the PLA scheme is efficient and performs better than the benchmark schemes. We also demonstrate that the proposed scheme is robust against malicious attacks even under relaxed false-alarm constraints.


\end{abstract}

\begin{IEEEkeywords}
Multiaccess edge computing, non-terrestrial networks, physical layer authentication,  task offloading.  
\end{IEEEkeywords}

\section{Introduction}

Internet-of-Things (IoT)  devices enable continuous data collection and processing across diverse domains, e.g., Industry 4.0 and smart agriculture. 
Nonetheless, these devices are typically constrained by limited computing capabilities, limiting their ability to process data locally in a timely manner. To address this limitation, task offloading to remote servers (using multiaccess edge computing (MEC) platforms)
may be an alternative to achieve low-latency and secure processing. However, even though there are several benefits in employing MEC platforms, the current coverage provided by terrestrial networks (TNs) is limited to approximately 15\% of the Earth’s surface. This leaves arge areas, such as remote regions, disaster zones, or conflict areas, without reliable connectivity~\cite{art:SpaceComs_Ntontin_2025}. 

To overcome this limitation, non-terrestrial networks (NTNs)
have emerged as a promising solution for achieving global coverage by utilizing aerial nodes (e.g., unmanned aerial vehicles (UAVs), high altitude platform station (HAPSs)) and spaceborne assets (e.g., low-Earth orbit satellites (LEOSats)). The dynamic deployment and flexible positioning of these NTN nodes enable seamless service provisioning to IoT devices operating beyond the reach of conventional TN infrastructure~\cite{art:3GPP_Saad_2024}. For instance, 
in~\cite{art:2023_Tang}, the authors explored partial task offloading strategies involving local devices and remote ground servers interconnected via ultra-dense LEOSats backhauls. In the recent work of~\cite{art:preprintAlejandro2026}, the authors considered a HAPS as a central entity, and multiple UAVs for coordination between the IoT devices and LEOSats in a multi-layer MEC-enabled NTN, for efficient service provisioning to IoT deployments over large areas. It was shown that the use of multiple layers can underpin a simplified coordination across wide-spanning IoT deployments. Also, the use of intermediate nodes, such as UAVs and HAPS, for task offloading to MEC-enabled LEOSats can provide higher communication rates to the LEOSats than the direct access from the IoT nodes, resulting in reduced transmission delays.
Nonetheless, the same openness and wide-area accessibility that make MEC-enabled NTNs attractive for remote IoT  applications also expose the offloading interface to malicious nodes, which could attempt to inject task requests and drain the limited onboard computing and backhaul resources, effectively creating a denial-of-service (DoS) condition for legitimate devices. Thus, it is of utmost importance to authenticate the devices for a proper use of the resources of the NTN. However, conventional cryptographic schemes can be computationally infeasible for constrained IoT devices~\cite{art:PLA_IoT_Lee_2020}. To address this challenge, physical-layer authentication (PLA) offers a lightweight alternative by leveraging features of the received signal for rapid verification, making it well-suited for massive IoT access~\cite{8546764}. In the context of NTNs, PLA has been investigated~\cite{10693434}, such as in~\cite{art:Abdrabou_PLANTN_2022}, where the authors exploited the Doppler-shift and the received power as features for a PLA scheme to authenticate the LEOSat constellations. Also, in~\cite{art:Abdrabou_PLANTN_2024}, the optimal detection thresholds to differentiate between legitimate and spoofing satellites were evaluated. 
To the best of our knowledge, existing NTN-PLA studies do not quantify how authentication decisions impact task admission, offloading decisions, and MEC resource allocation in multi-layer NTN computing architectures.

Motivated by the above and acknowledging the benefits of multiple NTN layers for IoT-based task offloading,  we study secure task offloading in a four-layer MEC-enabled NTN where ground IoT devices intend to offload computation tasks in the presence of malicious nodes. The NTN comprises multiple UAVs that forward the received signals in an amplify-and-forward (AF) manner to a HAPS, which serves as a central coordinator and authenticator and subsequently offloads admitted tasks to MEC-enabled LEOSats. Our key contributions are threefold: (i) we integrate a tag-based PLA mechanism into the uplink offloading pipeline and derive expressions for the test statistic distribution, the probability of false alarm (PFA), the probability of detection (PD), and the optimized threshold under a fixed false-alarm constraint; (ii) we couple the PLA-driven admission outcome with a joint task offloading and LEO computing-resource allocation formulation that prevents unauthenticated tasks from consuming MEC resources; and (iii) we propose an iterative block-coordinate descent solution and demonstrate via simulations that the proposed PLA-enabled framework improves feasible task admission and robustness against malicious offloading compared with benchmarks schemes. To the best of our knowledge, this is the first work that co-designs PLA-based authentication and MEC offloading/resource allocation in multi-layer NTNs.


\textit{Notation: } Bold uppercase and lowercase letters denote matrices and vectors, respectively; $(\cdot)^T$ and $(\cdot)^H$ represent the transpose and Hermitian transpose; $\|\cdot\|$ and $|\cdot|$ denote the Euclidean norm and absolute value; $\esp\{\cdot\}$ is the expectation operator, and $\mathbf{a}(\cdot,\cdot)$ denotes the array response of a half-wavelength spaced uniform planar array (UPA); $\mathbb{R}{\cdot}$ extracts the real part, and $\mathbf{x} \sim \mathcal{CN}(\mathbf{0}, \mathbf{C})$ indicates a circularly symmetric complex Gaussian vector with covariance $\mathbf{C}$.
\section{System Model - Joint Authentication, Offloading, and Resource Allocation  Framework}\label{sec:SysModel}
As illustrated in Fig.~\ref{fig:sysmodel}, we consider a heterogeneous network where $I$ remotely-located legitimate resource-constrained IoT devices offload computing tasks to an NTN centrally coordinated by a HAPS, denoted by $a$, with the assistance of $U$ UAVs and in the presence of $M$ malicious nodes. 
The HAPS acts as a CPU that offloads tasks to a set of MEC-enabled LEOSats to provide computing capacity and it manages the synchronization of the UAVs' transmissions for coherent reception.
The malicious nodes attempt to exploit network resources by generating computation-intensive tasks and offloading them to the system, thereby depleting resources available to legitimate devices.
To counteract this, the HAPS first authenticates devices and then allocates computing resources. 
By using a secure channel, the IoT devices and the HAPS agree on a time-varying key-based PLA scheme, which allows the HAPS to verify that the
received message signals are indeed from the intended devices in advance~\cite{10005830}. Specifically, the $i$th device uses its unique secret key $\mathbf{l}_i$, unknown to the malicious nodes, to generate an authentication tag that is superimposed onto the message signal. 
Although the malicious nodes are assumed to be aware of the authentication mechanism, their lack of knowledge about the secret keys prevents successful impersonation attacks. Moreover, since the keys are assumed to be time-varying, if Eve attempts to replay a previous message from one of the IoT devices, it will not be accepted by the HAPS.

Let $\mathcal{I} = \{1,\dots, I\}$, $\mathcal{U}=\{1,\dots,U\}$, $\mathcal{M} = \{1,\dots, M\}$, $\mathcal{U}=\{1,\dots,U\}$ , and $\mathcal{K}=\{1,\dots,K\}$ denote the set of IoT devices, malicious nodes, UAVs and LEOSats, respectively. In this system, the HAPS is assumed to be equipped with $N_{A}^R$ receiving antennas to communicate with the UAVs and $N_{A}^T$ transmit antennas to communicate with the LEOSats. Likewise, the UAVs are equipped with $N_{U}^{R}$ receiving antennas to communicate with the IoT devices and $N_{U}^{T}$ transmitting antennas to communicate with the HAPS. In particular, we assume $N_{U}^{T}=N_{U}^{R}=N_{U}$. Moreover, both the legitimate and malicious nodes are assumed to have a single antenna.


\begin{figure}[t]
    \centering
    \includegraphics[width=\linewidth]{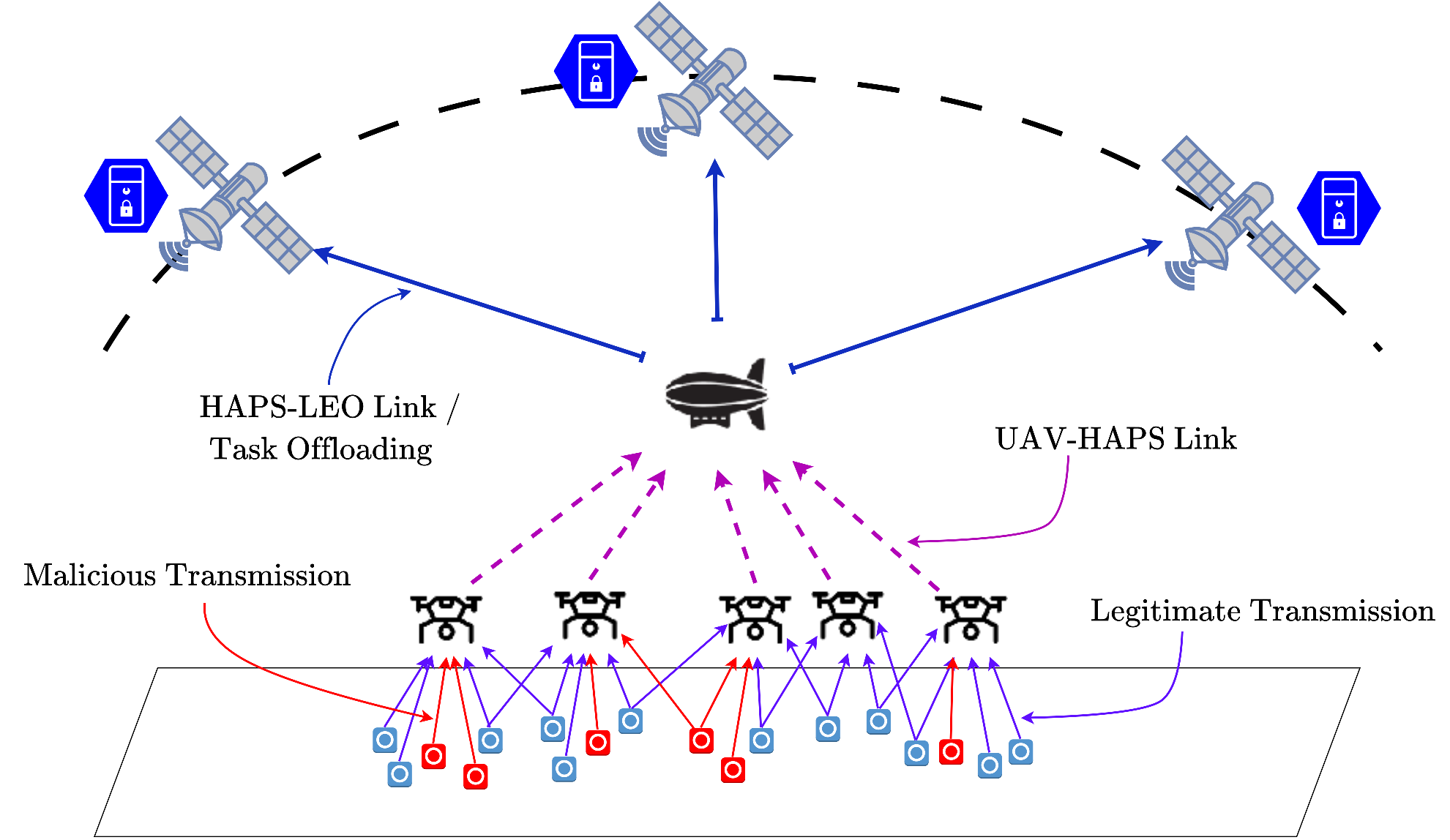}
    \caption{Illustration of the considered system model.}
    \label{fig:sysmodel}
\end{figure}

\subsection{Authentication Framework}\label{sec:authentication}

\subsubsection{Transmission Phase} 

Let $\mathbf{s}_{i} = [s_{i,1}, \dots, s_{i,L}]^T \in \mathbb{C}^{L \times 1}$ denote the message signal from the $i$th IoT device in a particular transmission block with length $L$. An authentication tag $\ttag_{i} = [t_{i,1}, \dots, t_{i,L}] \in \mathbb{C}^{L \times 1}$, which depends on $\mathbf{s}_{i}$ and $\mathbf{l}_i$, is created as
 \begin{align}\label{eq:tag}
     \ttag_{i} = g(\mathbf{s}_{i}, \mathbf{l}_i),     
 \end{align}
 where $g(\cdot)$ is a secure hash function~\cite{bookCrypto}. Here, one assumes that $\esp\{\|\mathbf{s}_i\|^2\}=\esp\{\|\ttag_i\|^2\}=L$ and  $\esp\{s_{i,l}\} = \esp\{t_{i,l}\} = 0, \forall i \in \mathcal{I}, \forall l \in \{1, \dots, L\}$. In addition, the message signals are not correlated to the tags~\cite{4451099}, i.e., $\esp\{\ttag_i^H\mathbf{s}_{i}\}=0$, and since the secret keys are unique to each device, the authentication tags of different devices are independent, i.e., $\esp\{\ttag_i^H\ttag_{i'}\}=0, \forall i' \neq i, i' \in \mathcal{I}$. Then, the tagged signal transmitted by the $i$th IoT device is written as 
 \begin{align}\label{eq:taggedsignal}
     \mathbf{x}_{i} = \rho_{\text{s},i} \mathbf{s}_{i} + \rho_{\text{t},i} \ttag_{i},
 \end{align}
where $\rho_{\text{s},i}$ and $\rho_{\text{t},i}$ are the power allocation coefficients of the message and of the tag signal of device $i$, respectively, which should satisfy $\rho_{\text{t},i} \ll  \rho_{\text{s},i}$ and  $\esp\{\|\mathbf{x}_i\|^2\}\leq L$, and hence, {$\rho_{\text{s},i}^2 + \rho_{\text{t},i}^2 \leq 1$}~\cite{8546764}. Consequently, if $\rho_{\text{s},i}=1$ (then $\rho_{\text{t},i}=0$), it indicates that device $i$ is not transmitting the authentication tag during the transmission block.

Each IoT device transmits its data to the UAVs over orthogonal subchannels with bandwidth $B_i$. Thus, the $N_U \times L$ received signal at UAV $u$ from device $i$ is written as
\begin{align}
     \mathbf{Y}_u = \sqrt{p_i}\mathbf{g}_{i,u}\mathbf{x}_i^H + \bm{\Omega}_u,
\end{align}
where $p_i$ is the maximum normalized transmit power of device $i$, $\bm{\Omega}_u \in \mathcal{C}^{N_u\times L}$ is the noise matrix at UAV $u$, which is assumed to have independent and identically distributed (i.i.d.) $\mathcal{CN}(0, 1)$ entries. 
Also, $\mathbf{g}_{i,u}$$=$$\sqrt{\beta_{i,u}}\mathbf{h}_{i,u}$ is the channel vector from device $i$ to UAV $u$, where $\beta_{i,u}$ is the air-to-ground large-scale fading,
while $\mathbf{h}_{i,u}$ is the small-scale Ricean fading component~\cite{art:Kfactor_Azari_2018}.



Then, the UAVs relay their received signal to the HAPS in a half-duplex AF manner to reduce onboard signal processing and energy consumption at the UAVs. Thus, the aggregated received signal at the HAPS is given by 
\begin{align}
    \mathbf{Y}_a = \scaleobj{.8}{\sum_{u\in\mathcal{U}}}\sqrt{q_{u,i}}\mathbf{G}_{u,a}^H\mathbf{Y}_u  + \bm{\Omega}_a,
\end{align}
where $\bm{\Omega}_a \in \mathcal{C}^{N_{A}^R\times L}$ is the noise matrix at the HAPS, modeled as i.i.d. with $\mathcal{CN}(0, 1)$ entries,  $\mathbf{G}_{u,a}$ is the $ \mathbb{C}^{N_u\times N_{A}^R}$ channel matrix between UAV $u$ and the HAPS, modeled as 

\begin{align}
    \mathbf{G}_{u, a}\!=\!\sqrt{\beta_{u, a}} e^{-j \frac{2\pi}{\lambda} d_{u, a}} \mathbf{a}_a(\vartheta_{u, a}^{\mathrm{AoA}}, \varphi_{u, a}^{\mathrm{AoA}}) \mathbf{a}_u^H(\vartheta_{u, a}^{\mathrm{AoD}}, \varphi_{u, a}^{\mathrm{AoD}}),
\end{align}
where $\beta_{u,a}$, $\lambda$ and $d_{u,a}$ are the large-scale fading, the wavelength,  and the distance between UAV $u$ and the HAPS, respectively. In addition, $\vartheta_{u, a}^{\mathrm{AoA}}$ and $\varphi_{u, a}^{\mathrm{AoA}}$ denotes the azimuth and elevation angles of arrival at the HAPS from UAV $u$, and $\vartheta_{u, a}^{\mathrm{AoD}}$ and $\varphi_{u, a}^{\mathrm{AoD}}$ are the azimuth and elevation angles of departure from UAV $u$ to the HAPS. Finally, $q_{u,i}$ is the amplification factor at UAV $u$ for device $i$, which can be obtained from the following transmit power constraint
\begin{align}
    \esp\{\|\sqrt{q_{u,i}}\mathbf{Y}_u\|^2_F\} &\leq Lp_u, \nonumber\\
    q_{u,i} &= \frac{p_u}{N_U(p_i\beta_{i,u}+1)}, \forall i \in \mathcal{I}, \forall u \in \mathcal{U},
\end{align}
where $p_u$ is UAV $u$'s maximum normalized transmit power.

\subsubsection{Authentication Phase}
Let us assume that the HAPS has perfect knowledge of the weighted effective channel sum between the UAVs and device $i$, $\mathbf{g}_{i,a} \triangleq \sum_{u \in \mathcal{U}}\sqrt{q_{u,i}p_i}\mathbf{G}_{u,a}^H\mathbf{g}_{i,u}$, and employs this knowledge to equalize the received signal and detect $\mathbf{x}_i$. Specifically, the estimated $\hat{\mathbf{x}}_i$ at the HAPS is given by
\begin{align}\label{eq:eq_xi}
    \hat{\mathbf{x}}_i = \frac{\mathbf{g}_{i,a}^H}{\|\mathbf{g}_{i,a}\|^2}\bigg(\scaleobj{.8}{\sum_{u\in\mathcal{U}}}\sqrt{q_{u,i}p_i}\mathbf{G}_{u,a}^H\mathbf{g}_{i,u}\mathbf{x}_i^H + \bar{\bm{\Omega}}_{a}\bigg),
\end{align}
where $\bar{\bm{\Omega}}_{a} \triangleq \sum_{u \in \mathcal{U}}\sqrt{q_{u,i}}\mathbf{G}_{u,a}^H\bm{\Omega}_u + \bm{\Omega}_a$. Given $\rho_{\text{t},i} \ll  \rho_{\text{s},i}$, the presence of the authentication tag in $\hat{\mathbf{x}}_i$ can be ignored, and the estimation of the message signal from device $i$, $\hat{\mathbf{s}}_i$, can be obtained with a tolerable bit error rate (BER)~\cite{7120016}. 

Then, based on $\hat{\mathbf{s}}_i$ and $\mathbf{l}_i$, the HAPS generates an expected authentication tag with the same secure hash function as the one employed by device $i$ as $\Tilde{\ttag}_i = g(\hat{\mathbf{s}}_i, \mathbf{l}_i). $
The HAPS also recovers the received tag from the residual signal as 
\begin{align}\label{eq:eq_ri}
    \mathbf{r}_i = {\left(\hat{\mathbf{x}}_i- \rho_{\text{s},i}\hat{\mathbf{s}}_i\right)}/{\rho_{\text{t},i}}.
\end{align}
To verify whether the estimated tag of device $i$ is present in the received signal, a hypothesis test is conducted as
\begin{align}\label{eq:autdes}
    \left\{\begin{array}{l}
\mathcal{H}_0: \Tilde{\ttag}_i \text{ is not present in } \mathbf{r}_i, \\
\mathcal{H}_1: \Tilde{\ttag}_i \text{ is present in } \mathbf{r}_i,
\end{array}\right. 
\end{align}
where $\mathcal{H}_1$ indicates the presence of the correct tag in the
received signal, whereas $\mathcal{H}_0$ shows that $\mathbf{r}_i$ does not contain the correct tag. Hence, the threshold test for each legitimate device is written as
\begin{align}\label{eq:threshold}
    \lambda_i =  \mathbb{R}\{\Tilde{\ttag}_i^H\mathbf{r}_i\} \stackrel{\mathcal{H}_1}{\underset{\mathcal{H}_0}{\gtrless}} \theta_i, \forall i \in \mathcal{I},
\end{align}
where $\theta_i$ is the decision threshold of device $i$ and $\lambda_i$ is the test statistic, which is constructed by match-filtering $\mathbf{r}_i$ with $\Tilde{\ttag}_i$, that is, $\lambda_i =  \mathbb{R}\{\Tilde{\ttag}_i^H\mathbf{r}_i\}.$
Given the nature of the hash function, if $\mathbf{s}_i$ is recovered with errors, then $\Tilde{\ttag}_i \neq \ttag_i$ with high probability. Thus, we consider that $\hat{\mathbf{s}}_i = \mathbf{s}_i$ and $\Tilde{\ttag}_i = \ttag_i$, and derive the expressions for the test statistics when device $i$ transmits only the message signal ($\lambda_i | \mathcal{H}_0$), and when a tagged signal is received from device $i$ ($\lambda_i | \mathcal{H}_1$) as follows
\begin{align}\label{eq:lambdakh0}
   \lambda_i | \mathcal{H}_0 &=\ttag_i^H\mathbf{r}_i = {\ttag_k^H\mathbf{n}_i}/{\rho_{\text{t},i}},\\
   \lambda_i | \mathcal{H}_1 &= \|\ttag_i\|^2 +{\ttag_k^H\mathbf{n}_i}/{\rho_{\text{t},i}},
\end{align} 
where $ \mathbf{n}_i \triangleq (\mathbf{g}_{i,a}^H\bar{\bm{\Omega}}_{a})/\|\mathbf{g}_{i,a}\|^2$. 

According to the hypotheses, if the HAPS accepts $\mathcal{H}_1$ when $\mathcal{H}_0$ is true, a false alarm will be raised. On the other hand, if $\mathcal{H}_1$ is accepted when $\mathcal{H}_1$ is true, a correct detection has occurred. In the following proposition, we derive the expressions for the PFA and the PD as well as the optimal decision threshold for each device $i$, $\theta^*_i$, to maximize the corresponding PD for a given PFA rate, $\mathsf{P}_{\mathrm{FA},i}$. 
\begin{proposition}\label{prop:cf}
    The closed-form expressions for the PFA, the PD, and the optimal decision threshold of the $i$th IoT device are given, respectively, by
    \begin{align}
        \mathrm{P}_{\mathrm{FA}, i}&=Q\Bigg(\frac{\theta_i}{\sqrt{L \sigma_{n_i}^2/(2\rho_{\text{t},i}^2)}}\Bigg),\label{eq:pfakf}\\
        \mathrm{P}_{\mathrm{D}, i}&=Q\Bigg(\frac{\theta_i-L}{\sqrt{L \sigma_{n_i}^2/(2\rho_{\text{t},i}^2)}}\Bigg),\label{eq:pdkf}\\
    \theta_i^* &= Q^{-1}(\mathsf{P}_{\mathrm{FA},i})\sqrt{L \sigma_{n_i}^2/(2\rho_{\text{t},i}^2)},\label{eq:opttheta}
    \end{align}
    where $Q(\cdot)$ represents the Q-function, and $\sigma_{n_i}^2$ denotes the per-symbol variance of $\mathbf{n}_i$.
\end{proposition}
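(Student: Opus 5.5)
The plan is to show that, conditioned on the tag $\ttag_i$, the test statistic $\lambda_i$ in \eqref{eq:threshold} is real Gaussian under each hypothesis. We then compute its mean and variance. Both probabilities then follow as Gaussian tail probabilities, and the threshold follows by inverting the false-alarm expression.

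\textbf{Step 1: the noise term.} First we characterize $\mathbf{n}_i = (\mathbf{g}_{i,a}^H\bar{\bm{\Omega}}_a)/\|\mathbf{g}_{i,a}\|^2$. Since $\bar{\bm{\Omega}}_a$ is a linear combination of the independent circularly symmetric Gaussian matrices $\bm{\Omega}_u$ and $\bm{\Omega}_a$, its columns are i.i.d. zero-mean $\mathcal{CN}$ vectors, with covariance
\begin{align}
\mathbf{C}=\scaleobj{.8}{\sum_{u\in\mathcal{U}}} q_{u,i}\mathbf{G}_{u,a}^H\mathbf{G}_{u,a}+\mathbf{I}.
\end{align}
The columns are independent across the $L$ symbol times. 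Consequently the entries of $\mathbf{n}_i$ are i.i.d. $\mathcal{CN}(0,\sigma_{n_i}^2)$, with
\begin{align}
\sigma_{n_i}^2=\mathbf{g}_{i,a}^H\mathbf{C}\,\mathbf{g}_{i,a}/\|\mathbf{g}_{i,a}\|^4 .
\end{align}
This is the per-symbol variance named in the statement, so we do not need to simplify it further.

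\textbf{Step 2: distribution of $\lambda_i$.} Conditioned on $\ttag_i$, the quantity $\ttag_i^H\mathbf{n}_i$ is $\mathcal{CN}(0,\|\ttag_i\|^2\sigma_{n_i}^2)$, because the tag is independent of the noise. Its real part is therefore $\mathcal{N}(0,\|\ttag_i\|^2\sigma_{n_i}^2/2)$. We replace $\|\ttag_i\|^2$ by its mean $L$, which is justified by the normalization $\esp\{\|\ttag_i\|^2\}=L$ and, for large $L$, by concentration; this also covers constant-modulus tags exactly. Dividing by $\rho_{\text{t},i}$ then gives
\begin{align}
\lambda_i|\mathcal{H}_0&\sim\mathcal{N}\big(0,\,L\sigma_{n_i}^2/(2\rho_{\text{t},i}^2)\big),\\
\lambda_i|\mathcal{H}_1&\sim\mathcal{N}\big(L,\,L\sigma_{n_i}^2/(2\rho_{\text{t},i}^2)\big).
\end{align}
Under $\mathcal{H}_1$ the mean is $L$ because the deterministic term $\|\ttag_i\|^2$ is likewise replaced by $L$, and this term is already real.

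\textbf{Step 3: probabilities and threshold.} From the definitions,
\begin{align}
\mathrm{P}_{\mathrm{FA},i}&=\Pr(\lambda_i>\theta_i\mid\mathcal{H}_0),\\
\mathrm{P}_{\mathrm{D},i}&=\Pr(\lambda_i>\theta_i\mid\mathcal{H}_1).
\end{align}
Standardizing the Gaussians from Step 2 yields \eqref{eq:pfakf} and \eqref{eq:pdkf} directly.

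For the threshold, the detection probability $\mathrm{P}_{\mathrm{D},i}$ is decreasing in $\theta_i$. By the Neyman--Pearson criterion, maximizing $\mathrm{P}_{\mathrm{D},i}$ subject to $\mathrm{P}_{\mathrm{FA},i}\le\mathsf{P}_{\mathrm{FA},i}$ therefore amounts to choosing the smallest admissible $\theta_i$. That is the value where the false-alarm constraint holds with equality. Because $Q$ is strictly decreasing and invertible, inverting \eqref{eq:pfakf} gives \eqref{eq:opttheta}. It is also worth noting that, for a Gaussian mean-shift problem with equal variances, the thresholded matched-filter statistic is itself the Neyman--Pearson optimal test.

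\textbf{Main obstacle.} The delicate step is Step 2: treating $\|\ttag_i\|^2$ as equal to $L$, and justifying Gaussianity of the real part given the tag. I would state this as an assumption on the tag design (unit-modulus tag symbols) or as a large-$L$ approximation. I would also note that we implicitly use the assumption $\hat{\mathbf{s}}_i=\mathbf{s}_i$, which the text makes, so that no residual message term remains in $\mathbf{r}_i$.
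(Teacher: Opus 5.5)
Your proposal is correct and follows the same route the paper relies on: the paper omits the details and cites the standard tag-based PLA derivation, in which $\lambda_i$ is Gaussian with mean $0$ under $\mathcal{H}_0$ and $L$ under $\mathcal{H}_1$, with common variance $L\sigma_{n_i}^2/(2\rho_{\text{t},i}^2)$. The Gaussian tails then give the PFA and PD, and making the false-alarm constraint tight gives $\theta_i^*$. Your replacement of $\|\mathbf{t}_i\|^2$ by $L$ is the same implicit constant-norm-tag approximation that underlies the paper's closed forms, so flagging it as an assumption is appropriate and is not a gap.
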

\begin{IEEEproof}
   The proof follows the methodologies in \cite[Proposition 1]{PLAIsabella}, and the details are omitted due to page constraints. 
\end{IEEEproof}

\subsection{Joint Offloading and Resource Allocation}
The requested computing tasks from device $i$ is embedded onto its message signal $\mathbf{s}_i$, and is denoted by 
\begin{align}
    \pmb{\psi}_i \triangleq [\delta_i, c_i, \tau_i^{\maxt}]^T, \forall i \in \mathcal{I},
\end{align}
where $\delta_i$ is the bit number, $c_{i}$ is the computing density, 
and $\tau_{i}^{\maxt}$ is the maximum tolerable delay. 
Specifically, we consider that the total delay experienced by task $\pmb{\psi}_i$ is given by
\begin{align}\label{eq:ti_0}
    \tau_i = \tau_{i,a} + \scaleobj{.8}{\sum\limits_{k\in\mathcal{K}}}b_{i}^{k}\left(\tau_{a,k} + \tau_k^{\mathrm{prop}}+ \tau_i^k\right),
\end{align}
where $b_i^k$ is a binary variable that indicates if task $i$ is offloaded to LEOSat $k$ ($b_i^k = 1$) or not ($b_i^k = 0$). Moreover, $\tau_{i,a} \triangleq \delta_i / R_{i,a}$ and $\tau_{a,k} \triangleq\sum_{i\in\mathcal{I}} b_{i}^k \delta_i/R_{a,k}$ are, respectively, the transmission delay between device $i$ and the HAPS and between the HAPS and LEOSat $k$, in which $R_{i,a}$ and $R_{a,k}$ are the corresponding communication rates, respectively, i.e., 
\begin{align}
    R_{i,a} &= B_i\log_2\bigg(1+\frac{\rho_{\text{s},i}^2p_i}{\rho_{\text{t},i}^2p_i + \sigma_{n_i}^2}\bigg),\\
    R_{a,k} &= B_a \log_2\bigg( 1 + \rho_{a,k}\frac{\omega_{a,k}^2}{\sigma_k^2} \bigg),
\end{align}
where $B_a$ is the bandwidth of the HAPS-LEO link, $\rho_{a,k}$ is the transmit power of the HAPS for the corresponding link, $\sigma_k^2$ is the noise power at the LEO and $\omega_{a,k}$ is the non-zero singular value of the rank-deficient line-of-sight (LoS) MIMO channel matrix between $a$ and $k$ that reflects the array gain, the path-loss and the atmospheric losses~\cite{rep:itu_t_P.618}. Moreover, $\tau_k^{\mathrm{prop}} \triangleq \frac{1}{c}d_{a,k}$ is the propagation delay between the HAPS and LEOSat $k$, where $c$ is the speed of light, and $d_{a,k}$ is the Euclidean distance between the HAPS and the LEOSat $k$. Finally, $\tau_i^k \triangleq \delta_{i} c_{i}/f_i^k$ is the computing delay of the task from device $i$ at LEOSat $k$, where $f_i^k$ are the resources allocated by LEOSat $k$ to compute the task offloaded by device $i$. We assume a regular offloading pattern in time, for which the optimization is performed once, thus, we do not consider the algorithm runtime delay as an additional source of delay.

\subsection{Problem Formulation}
This work aims to minimize the maximum weighted delay across tasks by selecting the optimized MEC server and computing the optimized share of computing resources for the tasks. Reasonably, the resource allocation should be intended only for legitimate device tasks. Thus, let us define an admission binary variable $\alpha_i$, that indicates whether the task of device $i$ is admitted as legitimate ($\alpha_i=1$) or rejected ($\alpha_i=0$) based on the PLA-based scheme presented in Sec.~\ref{sec:authentication}. Specifically, $\alpha_i=1$ if the test statistic $\lambda_i$ of device $i$ surpasses the optimized detection threshold, $\theta_i^*$. After computing the respective $\alpha_i$ for all $i \in \mathcal{I}\cup\mathcal{M}$, the joint offloading and resource allocation optimization problem can be formulated as
\begin{subequations}\label{eq:main_opt_gen}
\begin{align}\label{eq:main_opt_gen_01}
(\mathcal{P}):\;\; &\min_{\mathbf{F},\mathbf{B}} & &\max\limits_{i\in\mathcal{I}\cup\mathcal{M}} \alpha_i\frac{\tau_i}{\tau_i^{\maxt}}   \\ 
\label{eq:main_opt_gen_02}  &\text{s.t.}&&  \alpha_i\tau_i \leq \tau_{i}^{\maxt},  \;\;  \forall i\in \mathcal{I}\cup\mathcal{M}\\
\label{eq:main_opt_gen_03} & &&  \scaleobj{.9}{\sum\limits_{i\in\mathcal{I}\cup\mathcal{M}}} f_i^k \leq F_k^{\maxt},  \;\;  \forall k\in\mathcal{K},\\
\label{eq:main_opt_gen_04} &      && \scaleobj{.9}{\sum\limits_{k\in\mathcal{K}}}b_i^{k} = \alpha_i,  \;\;\forall i\in\mathcal{I}\cup\mathcal{M},   \\
\label{eq:main_opt_gen_05} &   &&b_{i}^{k} \in \{0,1\},  \;\;  \forall i\in\mathcal{I}\cup\mathcal{M}, \; \forall k\in\mathcal{K},  \\
\label{eq:main_opt_gen_06} &  &&    f_i^k \geq 0,  \;\;   \forall i\in\mathcal{I}\cup\mathcal{M},  k\in\mathcal{K},
\end{align}
\end{subequations}
where $\mathbf{F}\in\mathbb{R}_+^{I\times K}$ and $\mathbf{B}\in\{0,1\}^{I\times K}$ gather the computing resources and offloading decision variables, respectively. Also, $F_k^{\maxt}$ is LEOSat $k$'s maximum computing resource.

\section{Proposed Solution}\label{sec:Algorithm}
To solve the joint authentication, offloading, and resource allocation problem $(\mathcal{P})$ for the admitted users, we start by rewriting it in epigraph form as
\begin{subequations}\label{eq:opt2}
\begin{align}\label{eq:opt2_01}
(\mathcal{P}'):\;\; &\min_{\mathbf{F},\mathbf{B}, \mu} & &\mu\\ 
\label{eq:opt2_08}  &\text{s.t.}&&  \alpha_i\tau_i \leq \mu\tau_{i}^{\maxt},  \;\;  \forall i\in\mathcal{I}\cup\mathcal{M},\\
       \label{eq:opt2_09} &&&0\leq \mu \leq 1,\\
        &&&\eqref{eq:main_opt_gen_03},\eqref{eq:main_opt_gen_04},\eqref{eq:main_opt_gen_05},\eqref{eq:main_opt_gen_06}\nonumber.
\end{align}
\end{subequations}
Given the non-convex nature of $(\mathcal{P}')$, we propose a multi-layer multi-stage optimization framework. 
An iterative block-coordinate descent algorithm is performed over the offloading decision variables and the computing resources to find the solution to $(\mathcal{P}')$, as described in the following. 




\subsection{Offloading Decision}
For the admitted tasks $i\in\mathcal{I}$, and given the set of computing resources $\mathbf{F}$, the following optimization problem is solved to obtain the optimized offloading decisions
\begin{subequations}\label{eq:opt32}
\begin{align}\label{eq:opt3_01}
(\mathcal{P}^o):\;\;\; &\min_{\mathbf{B}, \mu} & &\mu  \\ 
\label{eq:opt3_02}  &\text{s.t.}   && \scaleobj{.9}{\sum\limits_{k\in\mathcal{K}}}b_i^k = 1,  \;\;  \;\;\forall i\in\mathcal{I},   \\
\label{eq:opt3_06} &   &&  b_i^k \in \{0,1\},  \;\;  \forall i\in\mathcal{I}, \; \forall k\in\mathcal{K},  \\
&&&\eqref{eq:opt2_08},\eqref{eq:opt2_09}.\nonumber
\end{align}
\end{subequations}
Problem $(\mathcal{P}^o)$ is not convex due to constraints \eqref{eq:opt2_08} and \eqref{eq:opt3_06}. To deal with these, we first relax constraint \eqref{eq:opt3_06} to allow the decision variables $b_i^k$ to be continuous as $0\leq b_i^k\leq 1$, and solve the problem iteratively with proximity constraints, such as $b_{i}^{k,(n)} - \varrho\leq b_i^k\leq b_{i}^{k,(n)} + \varrho$, where $ \varrho$ is chosen appropriately for convergence and the $^{(n)}$ suffix indicates that the corresponding term is the value from the previous iteration. 

Moreover, a quadratic penalty function $\phi(b_i^k) = (1-b_i^k)b_i^k$ is included in the objective function to drive the solution to binary values. However, given that this function is not convex, we linearize it via a first-order Taylor approximation as $\hat{\phi}(b_i^k) = b_i^k(1-2b_{i}^{k,(n)})$ and define the binary penalty function as $\Phi(\mathbf{B}) = \chi\sum_{k\in\mathcal{K},i\in\mathcal{I}}\hat{\phi}(b_i^k)$, where $\chi$ is increasingly set at each iteration. Then, to handle constraint \eqref{eq:opt2_08}, we start by rewriting the non-convex quadratic term in \eqref{eq:ti_0} as
\begin{align}\label{eq:23}
\frac{b_i^k\sum_{j\neq i}\delta_jb_{j}^{k}}{R_{a,k}} &= \frac{\big(\sum_{j\neq i}\delta_jb_{j}^{k} + b_i^k\big)^2 - \big(\sum_{j\neq i}\delta_jb_{j}^{k} - b_i^k\big)^2}{4R_{a,k}}.
\end{align}
Then, we apply the quadratic transform (QT)~\cite{art:QT_Shen_2018} to each quadratic term in \eqref{eq:23}, such that
\begin{align}
 \hspace{-3mm}   \frac{\big(\sum_{j\neq i}\delta_jb_{j}^{k} \pm b_i^k\big)^2}{R_{a,k}} &\Rightarrow 2z_{\pm,i} \Big| \scaleobj{.8}{\sum_{j\neq i}}\delta_jb_{j}^{k} \pm b_i^k \Big| - z_{\pm,i}^{2} R_{a,k}, \label{eq:24a}\\
    z_{\pm,i} &= {\big| \scaleobj{.8}{\sum_{j\neq i}}\delta_jb_{j}^{k,(n)} \pm b_{i}^{k,(n)} \big|}/{R_{a,k}^{(n)}}.\label{eq:25}
\end{align}
By enforcing $\sum_{j\neq i} \delta_jb_{j}^{k} \geq \delta_{i}$ for all $i\in\mathcal{I}$, we can remove the absolute value from \eqref{eq:24a} and \eqref{eq:25}, and rewrite \eqref{eq:ti_0} as 
\begin{align}
    \tau_i = \zeta_i + \scaleobj{.8}{\sum_{k\in\mathcal{K}}\sum_{j\neq i}}b_{j}^{k}\zeta_{i,j}^{k} + \scaleobj{.8}{\sum_{k\in\mathcal{K}}}b_i^k\zeta_{i}^{k},
\end{align}
where
\begin{align}
    \zeta_i &\triangleq \frac{\delta_i}{R_{i,a}}  - {\scaleobj{.8}{\sum_{k\in\mathcal{K}}\sum_{j\neq i}\delta_jb_{i}^{k,(n)}b_{j}^{k,(n)}}}/{R_{a,k}},\\
    \zeta_{i,j}^k &\triangleq {\delta_jb_{i}^{k,(n)}}/{R_{a,k}},\\
    \zeta_{i}^{k} &\triangleq{\delta_ic_i}/{f_i^k} + {\delta_i}/{R_{a,k}} + {\scaleobj{.8}{\sum_{j\neq i}}\delta_jb_{j}^{k,(n)}}/{R_{a,k}} + 2\tau_k^{\mathrm{prop}}.
\end{align}
Then, let $\pmb{\tau}^{\maxt} = [\tau_1^{\maxt},\dots,\tau_I^{\maxt}]^{T}$, $\mathbf{b}^k=[b_1^k,\dots,b_I^k]^T$, $\mathbf{b} = [\left(\mathbf{b}^{1}\right)^T,\dots,\left(\mathbf{b}^{K}\right)^T,\mu]^T$, $\mathbf{d}=[\delta_1,\dots,\delta_I]^T$, $\mathbf{b}_L = \mathbf{b}^{(n)} + \varrho $, $\mathbf{b}_H = -\mathbf{b}^{(n)} + \varrho $, $\mathbf{B}_I = \mathrm{diag}\left(\mathbf{d}\right) - \mathbf{d}\otimes\mathbf{1}_I^T + \mathbf{I}_I$, $\mathbf{B} = [\mathrm{blkdiag}(\mathbf{B}_I)_K, \mathbf{0}_{IK}]_{IK\times IK+1}$, $\mathbf{C}_L=[\mathrm{blkdiag}(\mathbf{I}_{I})_K, \mathbf{0}_{IK}]_{IK\times IK+1}$, $\mathbf{C}_U=- \mathbf{C}_L$, $\pmb{\zeta} = [\zeta_1,\dots,\zeta_{I}]^T$ and $\mathbf{A} = \left[\mathbf{A}_1, \dots, \mathbf{A}_I, \pmb{\tau}^{\maxt}\right]_{I\times IK+1}$, with
\begin{align}
    \mathbf{A}_k &=\left[\begin{matrix}
        \zeta_{1,1}^{k} & \cdots & \zeta_{1,I}^{k}\\
        \cdots & \cdots & \cdots\\
        \zeta_{I,1}^{k} & \cdots & \zeta_{I,I}^{k}\\
    \end{matrix}\right], \forall k \in \mathcal{K}.
\end{align}
Finally, by defining $\mathbf{\Upsilon} = [\mathbf{A}^T, \mathbf{B}^T,\mathbf{C}_L^T,\mathbf{C}_U^T,\mathbf{C}_L^T,\allowbreak\mathbf{C}_U^T]^T_{5IK + I  \times IK+1}$, $\mathbf{c} = [\pmb{\zeta}^T,\mathbf{0}_{IK}^T, \bar{\mathbf{b}}_L^T, \bar{\mathbf{b}}_U^T, \mathbf{1}_{IK}^T, \mathbf{0}_{IK}^T]^T_{5IK+I \times 1}$ and $\pmb{\upsilon} = [ \chi\left(\mathbf{1}_{IK} - 2\bar{\mathbf{b}}^{(n)}\right)^T, 1 ]^T_{IK+1\times 1}$, where $\bar{\mathbf{b}}^{(n)}$, $\bar{\mathbf{b}}_U$ and $\bar{\mathbf{b}}_L$ correspond to $\mathbf{b}^{(n)}$, $\mathbf{b}_U$ and $\mathbf{b}_L$ without their last element, the problem $(\mathcal{P}^o)$ can be relaxed to a linear program (LP) written as
\begin{align}\label{eq:opt3}
(\mathcal{P}_{\mathrm{LP}}^o):\;\;\; &\min_{\mathbf{b}} \pmb{\upsilon}^T\mathbf{b} \text{  s.t. } \mathbf{\Upsilon}\mathbf{b} \leq \mathbf{c},  \;\;  
\end{align}
which can be readily solved with the convex programming toolbox CVX. If $(\mathcal{P}_L^o)$ is not feasible due to maximum delay constraints of the tasks, we reject the task with the largest pull for resources from the system ($i^*=\argmax_{i\in\mathcal{I}}\{\frac{\delta_ic_i}{\tau_i^{\mathrm{max}}}\}$) until the problem becomes feasible for the remaining tasks.

\subsection{Remote Computing Resource Allocation}
The optimization problem to obtain the optimized computing resources at every LEOSat can be written as
\begin{subequations}\label{eq:opt22}
\begin{align}\label{eq:optf_01}
(\mathcal{P}^{f}):\;\;\; &\min_{\mathbf{F}, \mu} & &\mu  \\ 
\label{eq:optf_02b}  &\text{s.t.}   &&   \scaleobj{.9}{\sum\limits_{k\in\mathcal{K}}}{b_i^k\delta_ic_i}/{f_{i}^{k}}\leq \mu\tau_{i}^{\maxt} - \varepsilon_i,  \;\;  \forall i\in\mathcal{I} \\
&&&\eqref{eq:opt2_09}, \eqref{eq:main_opt_gen_03}, \eqref{eq:main_opt_gen_06}\nonumber,
\end{align}
\end{subequations}
where
$\varepsilon_i = {\delta_i}/{R_{i,a}} + \scaleobj{.9}{\sum_{k\in\mathcal{K}}}b_i^k\big({\scaleobj{.9}{\sum_{j\in\mathcal{I}}}b_{j}^{k}\delta_j}/{R_{a,k}}+2\tau_k^{\mathrm{prop}}\big)$.

This problem is convex and can be solved with dedicated software. Accordingly, problems $(\mathcal{P}^o)$ and $(\mathcal{P}^{f})$ are solved iteratively in a block coordinate descent framework, with increasing binary penalty $\chi$ until convergence. A description of the joint authentication, offloading and resource allocation framework can be seen in \textbf{Algorithm~\ref{alg:algorithm}}.

\SetKwBlock{RepeatE}{repeat}{}
\begin{algorithm}\label{alg:algorithm}
{
    \caption{ \textsc{Joint Authentication, Offloading, and Resource Allocation Optimization}}\label{alg:Solution}
    \ForEach{$j\in\mathcal{I}\cup\mathcal{M}$}{
        Compute $\hat{\mathbf{x}}_j$ as in \eqref{eq:eq_xi} and $\mathbf{r}_j$ as in \eqref{eq:eq_ri}\;
        Compute optimal decision threshold $\theta_j^*$ as in \eqref{eq:opttheta}\;
        Obtain test statistics $\lambda_j$ as in \eqref{eq:threshold}\;
        Admit or reject $j$ as per \eqref{eq:threshold}
     }
    Set $n_{\mathrm{it}} := 1$ and initialize $\mathbf{B}^{(1)}$ and $\mathbf{F}^{(1)}$ for admitted tasks\;
    \Repeat{$n_{\mathrm{it}}\geq N_{\mathrm{it}}$}{
        Compute $\mathbf{\Upsilon}$, $\pmb{\upsilon}$ and $\mathbf{c}$\;
        Solve $\mathcal{P}_{\mathrm{LP}}^o$ and obtain $\mathbf{B}$\;
        \If{$\mathcal{P}_{\mathrm{LP}}^o$ is unfeasible}{
            Reject task $i^*=\argmax_{i}\{\frac{\delta_ic_i}{\tau_i^{\mathrm{max}}}\}$ \;
            Go back to 8\;
        }
        Solve $\mathcal{P}^f$ and obtain $\mathbf{F}$\;
        \textbf{If} $||\mathbf{B}-\mathbf{B}^{(n_{\mathrm{it}})}||_F \leq \epsilon/(IK)$, \textbf{return}\;
        Adjust $\chi$ and $\varrho$\;
        Set $\mathbf{B}^{(n_{\mathrm{it}}+1)} = \mathbf{B}$ and $\mathbf{F}^{(n_{\mathrm{it}}+1)} = \mathbf{F}$\;
        Set $n_{\mathrm{it}} := n_{\mathrm{it}} + 1$\;
    }}\end{algorithm}

\section{Performance Evaluation}\label{sec:Results} 
To evaluate the performance of our PLA-based scheme we consider a system with $I=50$ and $M = 25$, spread over a $10 \times 10$~km$^2$ area following a binomial point process. Both legitimate and malicious nodes generate tasks with a bit number of $10$~Kb and computing density of $200$ CPU cycles per bit. 
The computing capacity at the LEOSats is set as $F_k^{\maxt}=10$~Gcycles/s. As per the 3GPP recommendations~\cite{rep:3gpp_38.101-5}, the carrier frequency for the HAPS-LEO links is chosen in the Ka band, as $28$~GHz, with a bandwidth $B_a=100$~MHz. For the IoT-UAV-HAPS access, we consider NB-IoT-compliant subchannels of bandwidth $B_i = 200$~KHz over the carrier at $2.1$~GHz. The altitude of the UAVs and the HAPS is set as $120$~m and $20$~Km, respectively. Moreover, all UAVs are considered to be uniformly spread in a grid over the area. A set of $K = 3$ LEOSats is considered within a constellation above the HAPS, with the satellites distributed at an Earth-central angular separation of 5 degrees. The constellation is confined within a maximum Earth-central angle of 20 degrees, guaranteeing visibility within the HAPS horizon. Furthermore, unless stated otherwise, $U = 25$, $\rho_{\text{t},i}^2=0.01$, ${\mathsf{P}_{\mathrm{FA}}}=10^{-6}$, $N_u = 64$, and $N_A^R = N_A^T = 256$, $p_i = 20$~dBm $\forall i \in \mathcal{I}$, $\rho_{u}=30$~dBm $\forall u \in \mathcal{U}$, and $\rho_{a,k}=33$~dBm $\forall k \in \mathcal{K}$. Moreover, $\varrho$ is chosen such that $\varrho = 0.01$ for the first iteration, and is uniformly increased to $\varrho = 0.5$ for the last iteration.

\begin{figure}[!t]
    \centering
    \includegraphics[width=85mm, height = 50mm]{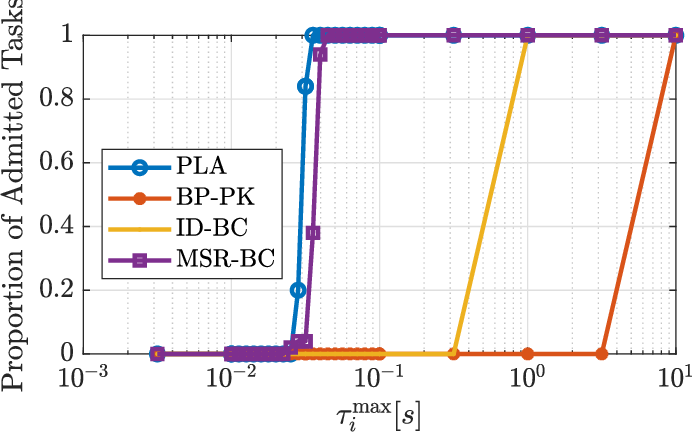}
    \caption{Proportion of feasible tasks for varying $\tau_i^{\maxt}$.}
    \label{fig:res1}
\end{figure}
In Fig.~\ref{fig:res1} we show the proportion of feasible tasks achieved under different authentication schemes as a function of the maximum delay constraint $\tau_i^{\maxt}$. Specially, we consider the following benchmarks: the bilinear pairing-based public-key (BP-PK) scheme from~\cite{art:Ni_Bilinear_2018}, the ID-based blockchain (ID-BC) scheme from~\cite{art:Shen_IBS_2020} and the lightweight MSR-based blockchain (MSR-BC) scheme from~\cite{art:Yang_BC_Auth_2022}. The computation times from these schemes are taken from~\cite{art:Yang_BC_Auth_2022}. Given that the operations for the proposed PLA scheme are performed in the physical layer with minimal signaling, it can be seen that the PLA-based results achieve task feasibility much earlier than all the other considered authentication schemes. This occurs because the time incurred by the authentication schemes entails an extra source of delay that further constraints the tasks to be computed within the maximum time $\tau_i^{\maxt}$. It is worthwhile to mention that although the performance of the MSR-BC scheme approximates to the one observed for the PLA scheme, it requires an additional blockchain infrastructure, which adds significant complexity to the system.

\begin{figure}[!t]
    \centering
    \includegraphics[width=\linewidth]{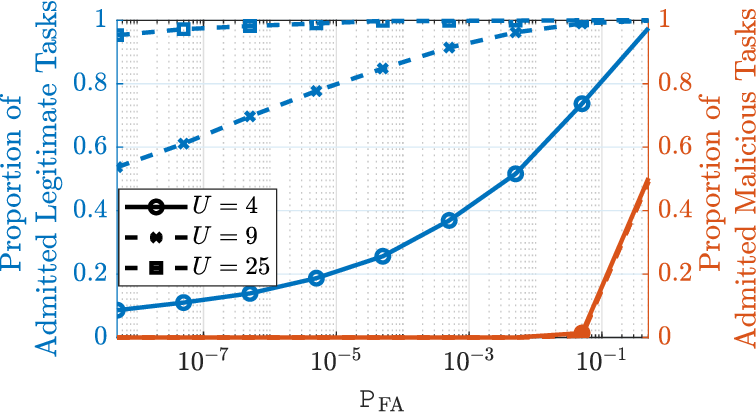}
    \caption{Proportion of legitimate and malicious tasks admitted due to authentication versus the fixed PFA rate, $\mathsf{P}_{\mathrm{FA}}$, for different number of UAVs.}
    \label{fig:res2}
\end{figure}
Figure~\ref{fig:res2} illustrates the proportion of legitimate and malicious tasks admitted during the authentication process as a function of the allowable PFA, $\mathsf{P}_{\mathrm{FA}}$, under varying numbers of UAVs. As expected, increasing $\mathsf{P}_{\mathrm{FA}}$ leads to a higher admission rate for legitimate tasks, due to the corresponding decrease in the optimal detection threshold $\theta_i^*$ for all $i \in \mathcal{I}$. However, this relaxation also slightly raises the acceptance rate of malicious tasks. Notably, even at $\mathsf{P}_{\mathrm{FA}} = 0.05$, the proportion of admitted malicious tasks remains as low as $1$\%. Furthermore, increasing the number of UAVs significantly enhances the detection of legitimate tasks: with $U = 25$, the admission rate for legitimate tasks exceeds $95$\% even under a stringent false alarm constraint.

\begin{figure}[!t]
    \centering
    \includegraphics[width=85mm, height = 50mm]{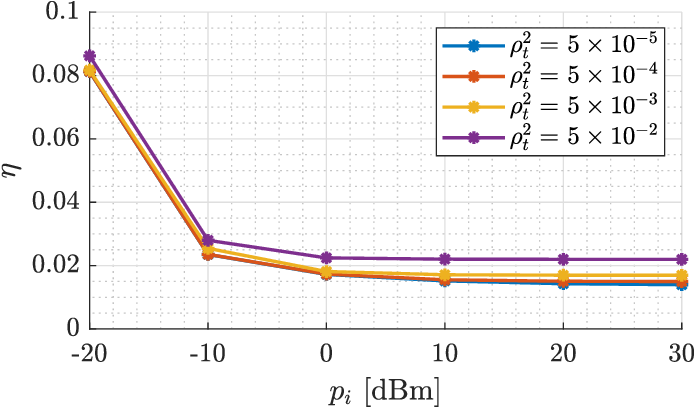}
    \caption{ Maximum relative delay, $\eta$, with varying IoT transmit power, $p_i$. for different tag allocation power ,$\rho_{\text{t},i}^2$.}
    \label{fig:res4}
\end{figure}
In Fig.~\ref{fig:res4}, we illustrate the maximum relative delay $\eta$ versus the transmit power of devices, $p_i$, for different values of the tag allocation power, $\rho_{\text{t},i}^2$. Note that by increasing $p_i$, the IoT device-HAPS communication rates also increase, thereby resulting in a lower offloading delay and consequently lower values of $\eta$. 
However, for $p_i\geq 0$ dBm, there is no significant impact on $\eta$. This occurs because as $p_i$ increases above this value, the contribution of $\tau_{i,a}$ to the total task delay is dominated by the HAPS–LEO transmission latency and computing delays. Furthermore, decreasing $\rho_{\text{t},i}^2$ has minimal effect on $\eta$, thereby enabling a greater portion of power to be allocated to message signal detection.

\section{Conclusions}\label{sec:Conclusions}
In this work, we developed a tag-based PLA framework designed to authenticate IoT devices offloading tasks to a multi-layer NTN for remote processing. We derived the expressions for the PFA and PD for each legitimate user and defined the optimal decision threshold, which was used to admit the tasks for offloading. A joint task offloading and resource allocation optimization problem was formulated for the admitted tasks and solved via block coordinate descent. Our results showed that the PLA scheme is efficient and allows for better task feasibility under stringent task requirements than the compared benchmark schemes. They also showed that increasing the number of UAVs leads to the reliable admission of legitimate tasks even under strict false-alarm requirements. 

\subsection{Future Works}
Potential extensions could include adapting the power allocation to device-specific requirements. Furthermore, it is worth looking into more practical scenarios with imperfect knowledge of the channel state information and imperfect message detection. 



\section*{Acknowledgement}
This work is funded by the Horizon Europe under the Marie Sklodowska Curie actions: ELIXIRION (GA 101120135) and by the U.K. Engineering and Physical Sciences Research Council (EPSRC) grant (EP/X04047X/2) for TITAN Telecoms Hub. The work of H.~Q.~Ngo was supported by the U.K. Research and Innovation Future Leaders Fellowships under Grant MR/X010635/1. The work of M. Matthaiou was supported by the European Research Council (ERC) under the European Union’s Horizon 2020 research and innovation programme (grant agreement No. 101001331). The work of I. W. G. da Silva, H. Q. Ngo and M. Matthaiou was also supported by a research grant from the Department for the Economy
Northern Ireland under the US-Ireland R\&D Partnership Programme. 

\bibliographystyle{IEEEtran}

\bibliography{readme.bib}

\vspace{12pt}

\end{document}